\newcommand{\calgebra}{\mathfrak{A}}
\newcommand{\nalgebra}{\mathfrak{M}}
\newcommand{\hilbert}{\mathcal{H}}
\newcommand{\iu}{i\mkern1mu}
\newcommand{\Dom}[1]{\mathcal{D}\left(#1\right)}
\newcommand{\ie}{\textit{i}.\textit{e}.}
\newcommand{\eg}{\textit{e}.\textit{g}.}
\newcommand{\ip}[2]{\left\langle #1 , #2 \right \rangle}
\newcommand*{\defeq}{\mathrel{\rlap{%
			\raisebox{0.3ex}{$\m@th\cdot$}}%
		\raisebox{-0.3ex}{$\m@th\cdot$}}%
	=}
\DeclareMathOperator{\Tr}{Tr}
\DeclareMathOperator{\Ran}{Ran}
\let\Im\relax
\newcommand{\Im}[1]{\operatorname{Im}\left(#1\right)}
\newcommand{\sgn}[1]{\operatorname{sgn}\left(#1\right)}
\theoremstyle{definition}
\newtheorem{definition}{Definition}[section]
\newtheorem{notation}[definition]{Notation}
\theoremstyle{plain}
\newtheorem{theorem}[definition]{Theorem}
\newtheorem{lemma}[definition]{Lemma}
\begin{document}

\title[]{Modular Structures on Trace Class Operators and Applications to Themodynamical Equilibrium States of Infinitely Degenerate Systems}

\author{R. Correa da Silva}
\email{ricardo.correa.silva@usp.br}
\address{Department of Mathematical Physics, Institute of Physics, University of S\~ao Paulo.}

	\begin{abstract}
	We study the thermal equilibrium states (KMS states) of infinitely degenerate Hamiltonians, in particular, we study the example of the Landau levels. We classify all KMS states in an example of algebra suitable for describing infinitely degenerate systems and we show that there is no cyclic and separating vector corresponding to the Landau Hamiltonian. Then, we try to reproduce the thermodynamical limit of a finite box as used in the very beginning of the theory of KMS states by Haag Hugenholtz and Winnink. Finally, we discuss the situation from the point of view of non-$\sigma$-additive probabilities, non-normal nor semifinite states, singular (Dixmier) states and, hence, an extension of the concept of KMS state.	\end{abstract}
	
	\maketitle

	\section{Introduction}
	\label{intro}
	
	 In this entire work we denote by $\calgebra$ a $C^\ast$-algebra and by $\nalgebra$ a von Neumann algebra. In addition, $\hilbert$ denotes a Hilbert space and often we will suppose $\calgebra, \nalgebra \subset B(\hilbert)$, \ie, the algebras will be thought as concrete ones.
	 
	\subsection{Landau Levels}\label{LandauLevels}
	
	The Landau quantization or Landau levels is the resulting quantum \hyphenation{de-scrip-tion}description of a simple physical problem that consists of a charged particle restricted to a plane with a constant uniform magnetic field in the orthogonal direction. The restriction to the plane plays no relevant role in our discussion and, considering that the particle is free in this direction, the momentum in $z$ and the Hamiltonian commutes and the only consequence is a continuous quantum number $p_z$ which adds a term $\frac{p_z^2}{2m}$ to the energy.
	
	The Hamiltonian of this problem can be obtained by means of the minimal coupling
	$$H=\frac{1}{2m}\left(p-\frac{q}{c}A\right)^2, \textrm{ where } \nabla \times A=B.$$
	
	In this work we will adopt a coordinate system in which the particle is in the $xy$ plane and $qB$ points in the direction of the $z$-axis. Also, we will adopt the symmetric gauge, namely $A=\frac{1}{2}\left(-Bx,By,0\right)$.
	
	Considering the dimensionless operators - which can be re-obtained in terms of $m$, $q$, $c$, $B$ and $\hbar$ - we can define the ladder operators
	
	\begin{equation}
	\begin{aligned}
	\label{ladderop}
	&a&=\frac{1}{\sqrt{2}}\left[\left(\frac{x}{2}+\frac{\partial}{\partial x}\right)-\iu \left(\frac{y}{2}+\frac{\partial}{\partial y}\right)\right]\\
	&a^\dagger&=\frac{1}{\sqrt{2}}\left[\left(\frac{x}{2}-\frac{\partial}{\partial x}\right)+\iu \left(\frac{y}{2}-\frac{\partial}{\partial y}\right)\right]\\
	&b&=\frac{1}{\sqrt{2}}\left[\left(\frac{x}{2}+\frac{\partial}{\partial x}\right)+\iu \left(\frac{y}{2}+\frac{\partial}{\partial y}\right)\right]\\
	&b^\dagger&=\frac{1}{\sqrt{2}}\left[\left(\frac{x}{2}-\frac{\partial}{\partial x}\right)-\iu \left(\frac{y}{2}-\frac{\partial}{\partial y}\right)\right],\\
	\end{aligned}
	\end{equation}
	such that $[a,a^\dagger]=[b,b^\dagger]=1$ and $[a,b]=[a,b^\dagger]=0$. For these operators, the dimensionless Hamiltonian and angular momentum in the $z$-axis are
	$H=a^\dagger a +\frac{1}{2}$ and $L_z=b^\dagger b+a^\dagger a$, respectively.
	
	Since the Hamiltonian and the angular momentum commute, we can choose a simultaneous basis of eigenvectors denoted $|n,m\rangle$ for $n\geq 0$ and $m\geq-n$, such that 
	\begin{equation}
	\label{eigenvalues}
	H|n,m\rangle=\left(n+\frac{1}{2}\right)|n,m\rangle \ \textrm{ and } \ L_z|n,m\rangle=-m|n,m\rangle.
	\end{equation}
	
	The ladder operators acting on these eigenstates satisfy
	\begin{equation}
	\begin{aligned}
	\label{laddereigen}
	&a|n,m\rangle=\sqrt{n}|n-1,m+1\rangle, \quad \textrm{ for } n\geq 1\\
	&a|0,m\rangle=0\\	&b|n,m\rangle=\sqrt{m+n}|n,m+1\rangle \quad \textrm{ for } m\geq -n+1\\
	&b|n,-n\rangle=0.\\
	\end{aligned}
	\end{equation}
	
	\subsection{Modular Theory}
	\label{subsecModular}
	Let us now define two operators in $\nalgebra$, which give rise to the operators that give name to this section. For the cyclic and separating vector $\Omega$, define the antilinear operators:
	\begin{center}
		\begin{minipage}[c]{0.44\textwidth}
			\[\begin{aligned}
			S_0 :	&\{A\Omega \in \hilbert \ | \ A\in \nalgebra\} 		&\to		& \hspace{0.4cm} \hilbert\\
			&\hspace{1.2cm} A\Omega 								&\mapsto	&\hspace{0.2cm} A^\ast\Omega
			\end{aligned}\]
		\end{minipage}\hspace{0.03\textwidth}and\hspace{0.03\textwidth}
		\begin{minipage}[c]{0.44\textwidth}
			\[\begin{aligned}
			F_0 :	& \{A^\prime\Omega \in \hilbert \ | \ A^\prime\in \nalgebra^\prime\} 		&\to		& \hspace{0.4cm} \hilbert \\
			& \hspace{1.2cm} A^\prime\Omega 								&\mapsto	&\hspace{0.2cm}A^{\prime\ast}\Omega
			\end{aligned}\]
		\end{minipage}
	\end{center}
	Note that the domains of the operators are dense subspaces.
	It is a standard result that the operators $S_0$ and $F_0$ are closable operators. Moreover, $S_0^\ast=\overline{F_0}$ and $F_0^\ast=\overline{S_0}$.	We will denote $S=\overline{S_0}$ and $F=\overline{F_0}$. An important point to stress now is that we omitted the dependence on $\Omega$ to keep the notation clean, but we will mention it in the following. Moreover, even though $S$ is not a bijection, it is injective and we write $S^{-1}$ (which is equal to $S$) to denote its inverse over its range. The same holds for $\Delta_\Omega$, which will be defined soon.
	
	\begin{definition}
		We denote by $J_\Omega$ and $\Delta_\Omega$ the unique anti-linear partial isometry and positive operator, respectively, in the polar decomposition of $S$, \ie,   $S=J_\Omega\Delta_\Omega^{\frac{1}{2}}$. $J_\Omega$ is called the modular conjugation and $\Delta_\Omega$ is called the modular operator. 
	\end{definition}
	
	Note that the existence and uniqueness of these operators are stated in the Polar Decomposition Theorem.
	
	Several properties hold for the modular operator, we refer to \cite{Bratteli1}, \cite{Araki74}, and \cite{RCS18} for the reader interested in this subject.
	
	One of the most important results in Modular Theory is the Tomita-Takesaki Theorem, which is extremely significant to both Physics and Mathematics. The proof and applications of this theorem can be found in \cite{Bratteli1}
	and \cite{Takesaki2002}. One of the consequences of this theorem is that, for each fixed $t\in\mathbb{R}$, $A\xrightarrow{\tau^\Omega_t}\Delta_\Omega^{\iu t}A\Delta_\Omega^{-\iu t}$ defines an isometry of the algebra. Hence, $\left\{\tau^\Omega_t \right\}_{t\in\mathbb{R}}$ is a one-parameter group of isometries.
	
	\begin{definition}[Modular Automorphism Group]
		Let $\nalgebra$ be a von Neumann algebra with cyclic and separating vector $\Omega$ and let $\Delta_\Omega$ be the associated modular operator. For each $t\in\mathbb{R}$, define the isometry $\tau^\Omega:\nalgebra \to \nalgebra$ by $\tau^\Omega_t(A)=\Delta_\Omega^{\iu t}A\Delta_\Omega^{-\iu t}$. We call the modular automorphism group\index{modular! automorphism group} the one-parameter group $\left\{\tau^\Omega_t \right\}_{t\in\mathbb{R}}$.
	\end{definition}
	
	\begin{notation}
		We will denote the modular automorphism group with respect to a cyclic and separating vector $\Omega$ by $\left\{\tau^\Omega_t \right\}_{t\in\mathbb{R}}$. In addition, given a faithful normal semifinite weight $\phi$ on a von Neumann algebra, we will denote $\left\{\tau^\phi_t \right\}_{t\in\mathbb{R}}$ the modular automorphism group with respect to the cyclic and separating vector obtained in the GNS-construction.
	\end{notation}
	
	All the topics mentioned here can be found in detail in any classical book on the subject, \eg \, \cite{Bratteli1}, \cite{KR83}, and \cite{Takesaki2002} or even in \cite{Araki74}.

	\subsection{KMS States and Dynamical Systems}
	\label{subsecKMS}

	We denote by $(\nalgebra,\alpha)$ the $W^\ast$-dynamical system with $\alpha$ a one-parameter group, $\mathbb{R} \ni t \mapsto \alpha_t \in Aut(\nalgebra)$. A general definition of KMS states can be found in any textbook on Operator Algebras such as \cite{Takesaki2002}, \cite{Bratteli2}, and \cite{KR86}. For a discussion in the context of equilibrium states in the thermodynamic limit we suggest reference \cite{haag67}. We will present this definition for completeness.
	
	\begin{definition}
		\label{KMSequi}
		Let $(\nalgebra, \tau)$ be a $W^\ast$-dynamical system and $\beta \in \mathbb{R}$. A normal state $\omega$ over $\nalgebra$ is said to be a $(\tau,\beta)$-KMS state, if, for any $A,B \in \nalgebra$, there exists a complex function $F_{A,B}$ which is analytic in $\mathcal{D}_\beta=\left\{z\in \mathbb{C} \mid 0<\sgn{\beta} \Im{ z}<|\beta|\right\}$ and continuous on $\overline{\mathcal{D}_\beta}$ satisfying
		\begin{equation}
		\label{eqKMS3}
		\begin{aligned}
		F_{A,B}(t) &=& \omega(A\tau_t(B)) \ \forall t\in \mathbb{R}, \\
		F_{A,B}(t+\iu \beta)& =& \omega(\tau_t(B)A) \ \forall t\in \mathbb{R}. \\
		\end{aligned}
		\end{equation}
		
	\end{definition}
	
We finish this section mentioning a result of the most importance. KMS states ``survive the thermodynamical limit'', which means that, under the topology that has physical significance in this mathematical description and under the adequate convergence and continuity hypothesis, the limit of KMS states is again a KMS state.
	
There is an important connection between KMS states and the modular operator, since the modular operator is a KMS state for the inverse temperature $\beta=-1$. The reason for the negative sign relies on a difference in convention among physicists and mathematicians which also reflects on the sign of the Hamiltonian. In the present work we will prefer the mathematicians convention unless explicitly stated, namely
$\alpha_t(A)=e^{iHt}Ae^{-iHt}$.
	
\subsection{Two von Neumann Algebras}\label{TvNA}
	
	Henceforth, consider $\hilbert$ to be a separable Hilbert space and $\{e_i\}_{i\in \mathbb{N}}$ an orthonormal basis of $B(\hilbert)$. Then, it is well known that
	$$\Tr(A)=\sum_{i=1}^\infty \ip{Ae_i}{e_i}$$ defines a normal faithful semifinite trace in $B(\hilbert)$.
	
	We know that the set of Hilbert-Schmidt operators $HS(\hilbert)\defeq \left\{A\in B(\hilbert) \ \middle|\ \Tr \left(|A|^2\right)<\infty \right\}$ is also a Hilbert space, where its inner product is given by
	$\ip{A}{B}_{HS}=\Tr(AB^\ast)$.
	
	In the following, we denote by $e\otimes f: \hilbert \to \hilbert$ the linear operator defined by $(e\otimes f)x=\ip{e}{x}f$, where $e,f\in \hilbert$. Notice that
	$\Tr(|e\otimes f|^2)=\|e\| \|f\|$, thus $e\otimes f\in HS(\hilbert)$. In addition, notice that $\{e_i\otimes e_j\}_{i,j\in\mathbb{N}}$ is an orthonormal basis of $HS(\hilbert)$. Hence, each $A\in HS(\hilbert)$ can be uniquely written as $A=\sum_{i,j=1}^\infty a_{ij} \ e_i\otimes e_j$, with $\|A\|_{HS}^2=\sum_{i,j=1}^\infty |a_{ij}|^2$. Since the converse is also true, there exists an isomorphism of Banach spaces between $HS(\hilbert)$ and the set $\ell_2(\mathbb{N}^2,\mathbb{C})=\left\{(a_{ij})_{i,j\in\mathbb{N}}\subset \mathbb{C}\ \middle| \ \sum_{i,j=1}^\infty |a_{ij}|^2<\infty\right\}$ with the norm $\left\| (a_{ij})_{i,j\in\mathbb{N}}\right\|^2=\sum_{i,j=1}^\infty |a_{ij}|^2$, which can be shown to be isomorphic as a Banach space to $\ell_2(\mathbb{C})$ by just using an enumeration of $\mathbb{N}^2$.

	We are not going to present a review about Schatten classes, but it is important to just mention two trace inequalities that will be used in this text and whose proofs can found in \cite{RCS18} and the references therein.
	\begin{lemma}
		\label{normtraceinequality}
		Let $\nalgebra$ be a von Neumann algebra, $\tau$ a normal faithful semifinite trace on $\nalgebra$, $A\in \nalgebra$ and $B\in \mathfrak{M}_\tau$. Then
		$\left|\tau(AB)\right|\leq\tau(|AB|)\leq \|A\|\tau(|B|)$.
	\end{lemma}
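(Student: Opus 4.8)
The plan is to treat the two inequalities separately and to prove each with the standard machinery of the noncommutative $L^2$-space associated with $\tau$. The first inequality, $|\tau(AB)|\le\tau(|AB|)$, is the noncommutative triangle inequality for the trace and holds for any $T\in\mathfrak{M}_\tau$ in the form $|\tau(T)|\le\tau(|T|)$; I would apply it with $T=AB$, which lies in $\mathfrak{M}_\tau$ because $\mathfrak{M}_\tau$ is a two-sided ideal of $\nalgebra$. To prove $|\tau(T)|\le\tau(|T|)$ I would take the polar decomposition $T=U|T|$ with $U\in\nalgebra$ a partial isometry, write $\tau(T)=\tau\!\left(|T|^{1/2}U|T|^{1/2}\right)$ using cyclicity of $\tau$, and then apply the Cauchy--Schwarz inequality for the trace form $\ip{X}{Y}_\tau=\tau(XY^\ast)$ to $X=|T|^{1/2}U$ and $Y=|T|^{1/2}$. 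Since $\tau(XX^\ast)=\tau\!\left(|T|^{1/2}UU^\ast|T|^{1/2}\right)\le\tau(|T|)$ (because $UU^\ast\le\mathbbm 1$ and $\tau$ is monotone) and $\tau(YY^\ast)=\tau(|T|)$, this yields $|\tau(T)|\le\tau(|T|)$.

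For the second inequality I would use polar decompositions to reduce to the positive case. Writing $B=V|B|$ and $AB=W|AB|$ with partial isometries $V,W\in\nalgebra$, one has $\tau(|AB|)=\tau(W^\ast AB)=\tau\!\left(W^\ast AV\,|B|\right)$, so setting $C\defeq W^\ast AV$ (which satisfies $\|C\|\le\|A\|$, partial isometries being contractions) reduces the claim to $\tau(|AB|)=\tau(C|B|)\le\|C\|\,\tau(|B|)$. Since $\tau(|AB|)$ is real and nonnegative, it suffices to establish the following statement: for every $C\in\nalgebra$ and every positive $P\in\mathfrak{M}_\tau$, $|\tau(CP)|\le\|C\|\,\tau(P)$.

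This last statement again follows from Cauchy--Schwarz: factor $\tau(CP)=\tau\!\left(CP^{1/2}(P^{1/2})^\ast\right)=\ip{CP^{1/2}}{P^{1/2}}_\tau$ and bound
\[
|\tau(CP)|\le\tau\!\left(CPC^\ast\right)^{1/2}\tau(P)^{1/2}=\tau\!\left(C^\ast CP\right)^{1/2}\tau(P)^{1/2},
\]
where cyclicity of $\tau$ gives the last equality. Finally $C^\ast C\le\|C\|^2\mathbbm 1$ together with the monotonicity and positivity of $\tau$ yields $\tau(C^\ast CP)=\tau\!\left(P^{1/2}C^\ast CP^{1/2}\right)\le\|C\|^2\tau(P)$, so $|\tau(CP)|\le\|C\|\,\tau(P)$, and combining with $\|C\|\le\|A\|$ closes the argument. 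I expect the main obstacle to be technical bookkeeping rather than conceptual: one must justify that $|T|^{1/2}$, $P^{1/2}$ and the products appearing above genuinely lie in the square-integrable class on which $\ip{\cdot}{\cdot}_\tau$ is a bona fide inner product, that the polar parts $U,V,W$ belong to $\nalgebra$, and that cyclicity and normal monotonicity of the semifinite trace may be applied to these elements of possibly finite but nonzero trace. These are precisely the points where the hypotheses that $\tau$ is normal, faithful and semifinite, and that $B\in\mathfrak{M}_\tau$, are consumed.
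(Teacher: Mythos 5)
Your proof is correct: both inequalities follow, as you argue, from the polar decomposition (whose partial isometries do lie in $\nalgebra$), cyclicity of $\tau$ on the square-integrable class, and the Cauchy--Schwarz inequality for the form $\tau(XY^\ast)$, with the ideal property of $\mathfrak{M}_\tau$ guaranteeing that $AB$ and $|AB|$ have finite trace. The paper itself gives no proof of this lemma --- it explicitly defers to \cite{RCS18} and the references therein --- and your argument is the standard one found in those sources, so there is nothing to reconcile.
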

	
	\begin{theorem}[Minkowski's Inequality\index{inequality! Minkowski}]
		\label{minkowski}
		Let $\nalgebra$ be a von Neumann algebra, $\tau$ a normal faithful semifinite trace in $\nalgebra$, and $p,\, q>1$ such that $\frac{1}{p}+\frac{1}{q}=1$. Then
		\begin{enumerate}[(i)]
			\item for every $A\in \nalgebra$, $\displaystyle \tau(|A|^p)^\frac{1}{p}=\sup\left\{\left|\tau(AB)\right| \ \middle | \ B\in \nalgebra,  \tau\left(|B|^q\right)\leq 1\right\};$
			\item for every $A,B \in \nalgebra$, $\displaystyle \|A+B\|_p\leq \|A\|_p+\|B\|_p$.
		\end{enumerate}
	\end{theorem}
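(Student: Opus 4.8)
The plan is to prove the variational formula (i) first and then obtain the triangle inequality (ii) as an immediate formal consequence. Throughout I write $\|X\|_p\defeq\tau(|X|^p)^{1/p}$ and I use the trace property $\tau(XY)=\tau(YX)$ together with the bounded Borel functional calculus inside $\nalgebra$. The identity in (i) splits naturally into an upper bound $\sup\{|\tau(AB)| : \|B\|_q\le 1\}\le\|A\|_p$, which is the non-commutative Hölder inequality, and a lower bound, which I will obtain by exhibiting an explicit maximiser.

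For the upper bound I would first reduce to the case $\|A\|_p,\|B\|_q<\infty$, since otherwise the inequality is trivial. Writing the polar decompositions $A=U|A|$ and $B=|B|V$, I would introduce the analytic family
\[
F(z)=\tau\!\left(U\,|A|^{pz}\,|B|^{q(1-z)}\,V\right),\qquad 0\le \Re{z}\le 1,
\]
which is bounded and continuous on the closed strip, analytic in its interior, and satisfies $F(1/p)=\tau(AB)$ because $1-1/p=1/q$. On the boundary line $\Re{z}=0$ the operator $|A|^{p\iu y}$ is a contraction (an isometry on the support of $|A|$), so Lemma \ref{normtraceinequality} gives $|F(\iu y)|\le \tau(|\,|B|^{q}|B|^{-q\iu y}V\,|)=\|B\|_q^{q}$; symmetrically $|F(1+\iu y)|\le\|A\|_p^{p}$ on $\Re{z}=1$. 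The Hadamard three-lines lemma applied at $z=1/p$ then yields $|\tau(AB)|=|F(1/p)|\le(\|B\|_q^{q})^{1/q}(\|A\|_p^{p})^{1/p}=\|A\|_p\|B\|_q$, and dividing by $\|B\|_q$ gives the upper bound.

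For the lower bound, assume $0<\tau(|A|^{p})<\infty$ (if $\tau(|A|^p)=0$ then $A=0$ by faithfulness and both sides vanish). With $A=U|A|$ I would test against
\[
B=\tau(|A|^{p})^{-1/q}\,|A|^{p-1}U^{\ast}\in\nalgebra .
\]
Using $U^{\ast}U|A|=|A|$ and $(p-1)q=p$, a direct computation gives $|B|^{q}=\tau(|A|^p)^{-1}\,U|A|^{p}U^{\ast}$, hence $\tau(|B|^{q})=1$, while $AB=\tau(|A|^p)^{-1/q}U|A|^{p}U^{\ast}$ gives $\tau(AB)=\tau(|A|^{p})^{1-1/q}=\|A\|_p$. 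Thus the supremum is attained and equals $\|A\|_p$. The remaining case $\tau(|A|^{p})=\infty$ I would treat by spectral truncation: with $E_n=\chi_{(1/n,n)}(|A|)$ the elements $|A|E_n$ lie in $\mathfrak{M}_\tau$, $\tau((|A|E_n)^{p})\uparrow\infty$, and feeding $|A|E_n$ into the same optimiser produces admissible $B_n$ with $\tau(AB_n)\to\infty$, so the supremum is $+\infty=\|A\|_p$ as well.

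Finally, (ii) follows formally from (i): for $A,B\in\nalgebra$,
\[
\|A+B\|_p=\sup_{\|C\|_q\le1}\bigl|\tau((A+B)C)\bigr|\le \sup_{\|C\|_q\le1}\bigl(|\tau(AC)|+|\tau(BC)|\bigr)\le \|A\|_p+\|B\|_p,
\]
since each of the two terms is bounded by the corresponding supremum. The genuinely hard part is the upper bound in (i): one must set up the analytic family correctly and control it on the boundary of the strip, where the bookkeeping with the partial isometries $U,V$, the support projections, and the contractivity of the imaginary powers requires care; by contrast, the optimiser computation and the deduction of (ii) are routine once (i) is in hand.
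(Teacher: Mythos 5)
The paper does not actually prove Theorem \ref{minkowski}: it is stated as a known trace inequality and its proof is delegated to \cite{RCS18} and the references therein. So the only meaningful comparison is with the standard literature argument, and that is essentially what you reproduce: the duality formula (i), with the H\"older upper bound obtained from a Hadamard three-lines estimate and the lower bound from an explicit optimiser, followed by (ii) as a formal consequence of (i). The architecture is right, the optimiser computation $\tau(|B|^q)=1$, $\tau(AB)=\tau(|A|^p)^{1/p}$ is correct, and the deduction of (ii) from (i) is correct.

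Three points need repair before this is a complete proof, and they are exactly where the content lies. First, the polar decomposition is misquoted: for general $B$ one has $B=V|B|=|B^\ast|V$, not $B=|B|V$; the analytic family should therefore be $F(z)=\tau\left(U|A|^{pz}V|B|^{q(1-z)}\right)$, which still satisfies $F(1/p)=\tau(AB)$ and admits the same boundary bounds via Lemma \ref{normtraceinequality}. Second, the assertion that $F$ is bounded and continuous on the closed strip and analytic in its interior is not free: for $0<\Re{z}<1$ the trace-class membership of $U|A|^{pz}V|B|^{q(1-z)}$ is itself a H\"older-type statement, so one must first reduce to $A$ and $B$ with $\tau$-finite support projections (or finite spectral resolutions), where $F$ is manifestly entire and everything is algebraic, and then pass to the limit using normality of $\tau$; without that reduction the three-lines lemma has nothing rigorous to act on. Third, when $\tau(|A|^p)=\infty$ the truncations $E_n=\chi_{(1/n,n)}(|A|)$ need not have $\tau(|A|^pE_n)<\infty$: since $|A|^pE_n\geq n^{-p}E_n$, if $\tau(E_n)=\infty$ your normalising constant is $0$ and the optimiser is not admissible. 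In that case you should instead invoke semifiniteness to pick projections $F\leq E_n$ with $\tau(F)$ finite but arbitrarily large and test against $B=\tau(F)^{-1/q}FU^\ast$, which gives $\tau(|B|^q)=1$ and $|\tau(AB)|\geq n^{-1}\tau(F)^{1/p}\to\infty$. With these repairs your proof is the standard one and is sound.
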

	
	Here, two von Neumann algebras that seem suitable for the description of systems with infinite degeneracy, such as the Landau levels problem are presented. These von Neumann algebras are known, but the author is not aware of any reference containing the proofs of this known facts, hence we will present the author's proofs for completeness.
	
	Let us define some operators of interest in $B(HS(\hilbert))$. Since the set of Hilbert-Schimidt operators (in fact, all Schatten classes) is a $\ast$-ideals of compact operators, given  $A,B\in HS(\hilbert)$ we can define the operator $A\vee B: HS(\hilbert)\to HS(\hilbert)$ by $(A\vee B)X=AXB^\ast$.
	By Lemma \ref{normtraceinequality} we have that $$\|(A\vee B)X\|_{HS}=\|AXB^\ast\|_{HS}=\left(\Tr\left(|AXB^\ast|^2\right)\right)^{\frac{1}{2}}\leq \|A\| \|B\|\|X\|_{HS},$$
	hence $A\vee B\in B(HS(\hilbert))$.
	Furthermore,
	$$(A_1\vee B_1)(A_2\vee B_2)X=(A_1\vee B_1)A_2 X B_2^\ast=A_1 A_2 X B_2^\ast B_1^\ast=(A_1 A_2 \vee B_1 B_2)X, \ \forall X \in HS(\hilbert)$$ and
	$$\begin{aligned}
	\ip{(A\vee B)X}{Y}_{HS}&=\Tr(A X B^\ast Y^\ast)\\
	&=\Tr(X B^\ast Y^\ast A)\\
	&=\Tr(X (A^\ast Y B)^\ast)\\
	&=\ip{X}{(A^\ast \vee B^\ast)Y}, \forall X, Y \in HS(\hilbert).
	\end{aligned}$$
	Hence, $(A_1\vee B_1)(A_2\vee B_2)=(A_1 A_2\vee B_1 B_2)$ and $(A\vee B)^\ast = (A^\ast\vee B^\ast)$.
	
	Now it is easy to see that $$\calgebra_l=\{A\vee \mathbbm{1} \in B(HS(\hilbert)) \ | \ A\in HS(\hilbert)\} \textrm{ and } \calgebra_r=\{\mathbbm{1}\vee B \in B(HS(\hilbert)) \ | \ B\in HS(\hilbert)\}$$ are two $\ast$-subalgebras of $B(HS(\hilbert))$.
	
	Let us show that $(\calgebra_r)^\prime=\calgebra_l$. Let $H\in (\calgebra_l)^\prime$, then $(\mathbbm{1}\vee A)H(X)=H(X)A^\ast=H(XA^\ast)=H(\mathbbm{1}\vee A)X$ for every Hilbert-Schmidt operator $A$.
	
	Define the operator $\tilde{H}:\hilbert \to \hilbert$ by $\tilde{H}y=H(P_y)y$, where $P_y$ is the orthogonal projection on $span[\{y\}]$. Notice now that, if $X$ if a finite-rank operator, $\overline{\Ran{X}}$ must be finite dimensional. Let $R_X$ denote the orthogonal projection on $\overline{\Ran{X}}$. Then, for every $y\in\hilbert$ an any $P$ projection with $P_{Xy}\leq P$,	
	\begin{equation}
	\label{calcx2}
	\begin{aligned}
	H(X)y&=H(X P_y) y\\
	&=H(R_{X P_y})X y\\
	&=H(P_{X y})X y\\
	&=H(P P_{X y}) X y\\
	&=H(P)P_{X y} X y\\
	&=H(P) X y.
	\end{aligned}
	\end{equation}
	It follows from the equation \eqref{calcx2} that $\tilde{H}$ is a linear operator, since $$\begin{aligned}
	\tilde{H}(x+y)&=H(P_{x+y})(x+y)\\
	&=H(P_{x+y})x+H(P_{x+y})y\\
	&=H(P)x+H(P)_y\\
	&=H(P_x)x+H(P_y)y\\
	&=\tilde{H}x+\tilde{H}y,
	\end{aligned}$$
	where $P$ is the orthogonal projection on $span[\{x,y\}]$. Now, it follows easily that $\tilde{H}$ is a bounded operator and, again by equation \eqref{calcx2}, it follows that $(\tilde{H}\vee \mathbbm{1})Xy=\tilde{H}Xy=H(X)y$, which means that $H(X)=(\tilde{H}\vee\mathbbm{1})X$. Since $H$ and $\tilde{H}\vee\mathbbm{1}$ coincide in finite-rank operators and these operators constitute a dense subset of the set of compact operators, it follows that $H=\tilde{H}\vee\mathbbm{1}$ as operators in $B(HS(\hilbert))$.
	
	Finally, the conclusion is that, if $H\in(\calgebra_r)^\prime$, there exists $\tilde{H}\in B(\hilbert)$ such that $H=\tilde{H}\vee \mathbbm{1}$ and, since the converse is quite easy, we conclude that $(\calgebra_r)^\prime=\calgebra_l$.
	
	The proof that $(\calgebra_l)^\prime=\calgebra_r$ follows by a similar argument, but using adjoints.
	
	Notice that the von Neumann Bicommutant Theorem implies that $\calgebra_l$ and $\calgebra_r$ are von Neumann algebras.

	\section{Classification of KMS States on a von Neumann Algebra for Infinitely Degenerate Systems} \label{classification}
	
	In this section we present the main results of this work. The von Neumann algebras presented in Subsection \ref{TvNA} are suitable for the treatment of infinitely degenerate Hamiltonians, as is the case of the Landau Hamiltonian that is discussed in \cite{bagarello2010}. Hence, from the point of view of Tomita-Takesaki modular theory,  characterizing all cyclic and separating vectors for this algebras can enlighten not only the modular structure of these algebras, but also provide relevant information about the possible faithful (normal) KMS states on these algebras. That is exactly what is done in Subsection \ref{MS}; then, in Subsection \ref{H} we use Modular Theory and the results of the previous subsection to obtain the Hamiltonians for which there are KMS states. In particular, we check that there is no KMS state to the Hamiltonian that originates the famous Landau levels, which contrasts with previous results \cite{bagarello2010}; since it is well known that restricting the system to a finite box ceases the infinite degeneracy of the Landau levels, we analyse what happens with the thermodynamical limit of the Gibbs equilibrium states in a box.
	
	\subsection{Modular Structure }\label{MS}

In order to construct the modular operators, a cyclic and separating vector is required. Since our algebras of interest are $\nalgebra_r$ and $\nalgebra_l$, the main objective here is to study necessary and sufficient conditions for a vector of $HS(\hilbert)$ to be cyclic and separating for these algebras.
	
	\begin{theorem}
		\label{eqciclicseparating}
		Let $\hilbert$ be a separable infinite dimensional Hilbert space, $\{e_n\}_{n\in\mathbb{N}}\subset \hilbert$ be an orthonormal basis of $\hilbert$ and $\Omega=\sum_{m=1}^\infty\sum_{n=1}^\infty a_{mn}e_m\otimes e_n\in HS(\hilbert)$. The following statements are equivalent:
		
		\begin{enumerate}[(i)]
			\item $\Omega$ is separating for $\calgebra_l$ and $\calgebra_r$; 
			\item $\displaystyle \left(\sum_{n=1}^\infty a_{mn}e_n\right)_{m\in\mathbb{N}}$ is a basis of $\hilbert$;
			\item $\displaystyle \left(\sum_{m=1}^\infty a_{mn}e_m\right)_{n\in\mathbb{N}}$ is a basis of $\hilbert$;
			\item $\displaystyle \overline{span\left\{\sum_{n=1}^\infty a_{mn}e_n\right\}_{m\in\mathbb{N}}}=\hilbert=\overline{span\left\{\sum_{m=1}^\infty a_{mn}^\ast e_m\right\}_{n\in\mathbb{N}}}$.
		\end{enumerate}
	\end{theorem}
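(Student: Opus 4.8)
The plan is to translate all four conditions into two operator-theoretic properties of $\Omega$ regarded as an element of $B(\hilbert)$, namely injectivity and density of the range, and then to read off the statements. First I would record the two product formulas $(A\vee\mathbbm 1)\Omega=A\Omega$ and $(\mathbbm 1\vee B)\Omega=\Omega B^\ast$, valid for all $A,B$. Consequently $\Omega$ is separating for $\calgebra_l$ exactly when $A\Omega=0$ forces $A=0$; since $\ker{A}$ is closed, $A\Omega=0$ is equivalent to $\overline{\ran{\Omega}}\subseteq\ker{A}$, so separation holds iff $\overline{\ran{\Omega}}=\hilbert$. Symmetrically, $\Omega$ is separating for $\calgebra_r$ iff $\Omega B^\ast=0$ forces $B=0$; testing against rank-one operators shows that $\Omega B^\ast=0$ has a nonzero solution precisely when $\ker{\Omega}\neq\{0\}$, so separation for $\calgebra_r$ is equivalent to injectivity of $\Omega$. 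Thus (i) is equivalent to the conjunction: $\Omega$ is injective and has dense range.

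Next I would identify the coordinate vectors. A direct computation gives $\Omega e_m=\sum_{n}a_{mn}e_n$, and using $(e_m\otimes e_n)^\ast=e_n\otimes e_m$ one gets $\Omega^\ast e_n=\sum_{m}a_{mn}^\ast e_m$. Hence the two families appearing in (iv) are literally $\{\Omega e_m\}_m$ and $\{\Omega^\ast e_n\}_n$, so (iv) reads $\overline{\ran{\Omega}}=\hilbert=\overline{\ran{\Omega^\ast}}$. Because $\overline{\ran{\Omega^\ast}}=\ker{\Omega}^{\perp}$, the second equality is equivalent to injectivity of $\Omega$ and the first to density of its range; this yields the clean equivalence (i)$\Leftrightarrow$(iv).

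For (ii) and (iii) I would read ``basis'' as a complete, $\omega$-independent system, i.e. one whose synthesis map $(c_m)\mapsto\sum_m c_m f_m$ on $\ell_2$ is injective with dense range (completeness being density of the range, $\omega$-independence being injectivity). The family in (ii) is $\{\Omega e_m\}_m$, whose synthesis map is $\Omega$ itself under the identification of $\ell_2$ with $\hilbert$ via the orthonormal basis; hence (ii) says exactly that $\Omega$ is injective with dense range, which is (i). For (iii), writing $C\colon\sum c_k e_k\mapsto\sum\overline{c_k}e_k$ for the antilinear isometric conjugation and $v_n=\sum_m a_{mn}e_m$, one checks $v_n=C\Omega^\ast e_n$, so the synthesis map of $\{v_n\}_n$ is $C\Omega^\ast C$. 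This is injective with dense range iff $\Omega^\ast$ is, iff $\Omega$ has dense range and is injective---again (i). This closes the cycle.

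The step I expect to be the main obstacle is precisely fixing the notion of ``basis'' and justifying its reformulation as ``injective with dense range of the synthesis operator,'' together with the bookkeeping that exchanges injectivity and density of the range when passing from $\Omega$ to $\Omega^\ast$ and the careful use of the antilinear conjugation $C$ relating the vectors of (iii) to $\ran{\Omega^\ast}$. If instead one only assumes ``basis'' to mean a total set, then (ii) and (iii) would each capture a single one of the two properties rather than their conjunction, so one must be explicit that the independence half of the basis condition is what supplies the complementary property in each case. By contrast, the reductions of the separating conditions and the identification (i)$\Leftrightarrow$(iv) are routine once the two product formulas and the expressions for $\Omega e_m$ and $\Omega^\ast e_n$ are in hand.
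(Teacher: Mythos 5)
Your proof is correct under your stated reading of ``basis'', and it reorganises the argument rather than merely reproducing it. You and the paper share the same core ingredients --- the product formulas $(A\vee\mathbbm{1})\Omega=A\Omega$ and $(\mathbbm{1}\vee B)\Omega=\Omega B^\ast$, and rank-one operators as witnesses of non-separation --- but the paper works coordinate-wise with the expansions $\sum_m\ip{e_m}{x}Ay_m$ and treats each implication by a separate computation, whereas you collapse everything to the two operator properties $\ker{\Omega}=\{0\}$ and $\overline{\ran{\Omega}}=\hilbert$ and then observe that (ii), (iii) and (iv) are literal restatements of this pair; your explicit use of the conjugation $C$ to relate the unconjugated family in (iii) to $\ran{\Omega^\ast}$ is a detail the paper passes over silently. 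This buys a shorter, uniform proof. The one genuine divergence is the meaning of ``basis'': you take it to mean ``complete with injective $\ell_2$-synthesis map'', under which the equivalences are exact; the paper instead aims, in $(i)\Rightarrow(ii)$, at the stronger conclusion that $(y_m)_{m\in\mathbb{N}}$ is a Schauder basis, which it reaches by first excluding that some $y_k$ is a finite combination of its predecessors and then invoking a Gram--Schmidt argument. If ``basis'' is meant in the Schauder sense, your argument for $(i)\Rightarrow(ii)$ delivers only completeness plus $\ell_2$-independence and would still need an upgrading step of the paper's kind (note that $(y_m)$ need not even be minimal, since $e_m\in\ran{\Omega^\ast}$ is not automatic); your $(ii)\Rightarrow(i)$ direction is unaffected, since a Schauder basis is in particular complete with unique, hence only trivially null, expansions. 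You were right to single this out as the crux, and your choice of definition is the one under which the four conditions are genuinely equivalent.
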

	\begin{proof}
		In order do simplify the notation, let us set $y_m=\sum_{n=1}^\infty a_{mn}e_n$ and $z_n=\sum_{m=1}^\infty a_{mn}^\ast e_m$ in the following. With these definitions we have $\Omega =\sum_{m=1}^\infty e_m\otimes y_m=\sum_{n=1}^\infty z_n\otimes e_n$.
		
		$(ii)\Rightarrow (i)$ If $(y_m)_{m\in\mathbb{N}}$ is a basis of $\hilbert$,
		\begin{equation}\begin{aligned}
		\label{calculation1}
		& \ 0=(A\vee\mathbbm{1})\Omega=A\Omega \\
		\Leftrightarrow & \ 0=A\Omega x =\sum_{m=1}^\infty \ip{e_m}{x} Ay_m, \quad \forall x \in \hilbert \\
		\Leftrightarrow & \ Ay_k=0, \quad \forall k\in\mathbb{N}\\
		\Leftrightarrow & \ A=0,
		\end{aligned}\end{equation}
		and
		\begin{equation}\begin{aligned}
		\label{calculation2}
		& \ 0=(\mathbbm{1}\vee A)\Omega=\Omega A^\ast \\
		\Leftrightarrow & \ 0=\Omega A^\ast x =\sum_{m=1}^\infty \ip{e_m}{A^\ast x} y_m, \quad \forall x \in \hilbert \\
		\Leftrightarrow & \ 0=\ip{A e_n}{x}, \quad \forall n\in\mathbb{N}, \ \forall x \in \hilbert\\
		\Leftrightarrow & \ A e_k=0, \quad \forall k \in \mathbb{N}\\
		\Leftrightarrow & \ A=0.
		\end{aligned}\end{equation}

		$(i)\Rightarrow (ii)$ We will prove the contrapositive.
		
		Suppose first that $\overline{span[\{y_m\}_{m\in\mathbb{N}}]}\neq\hilbert$. Then, there exists $y\in\overline{span[\{y_m\}_{m\in\mathbb{N}}]}^\perp$ and $A=y\otimes e_1 \in HS(\hilbert)\setminus\{0\}$ is such that $(A\vee\mathbbm{1})\Omega=A\Omega=0$, since $\Ran(\Omega)=\overline{span[\{y_m\}_{m\in\mathbb{N}}]}$.
		
		Suppose now that there exists some $k\in\mathbb{N}$ such that $y_k=\sum_{j=1}^{k-1} \lambda_j e_j$. Then 
		$$
		\begin{aligned}
		\Omega&=\sum_{m=1}^\infty e_m\otimes y_m\\
		&= \sum_{\substack{m=1 \\ m\neq k}}^\infty e_m\otimes y_m+\sum_{j=1}^{k-1}\lambda_j e_k\otimes y_j\\
		&=\sum_{m=1}^{k-1} \left(e_m+\lambda_m^\ast  e_k\right)\otimes y_m+\sum_{m=k+1}^{\infty} e_m\otimes y_m.
		\end{aligned}$$
		
		Notice that $F\defeq\overline{span\left[\{e_m+\lambda_m^\ast e_k\}_{m=1}^{k-1}\cup\{e_m\}_{m=k+1}^\infty\right]}=\overline{span\left[\{e_m\}_{m\in\mathbb{N}}\setminus\{e_k\}\right]}\neq \hilbert$, hence there exists $w\in F^\perp\setminus\{0\}$. Then, $A^\ast=e_1\otimes w \neq 0$ is such that $(\mathbbm{1}\vee A)\Omega=\Omega A^\ast=0$ as can be checked in equation \eqref{calculation2}.
		
		From what we have now, we can conclude that if we perform the Gram-Schmidt process in the sequence $(y_m)_{m\in\mathbb{N}}$ it will never vanish. Thus, we will end up with an orthonormal sequence $(\tilde{y}_m)_{m\in\mathbb{N}}$ such that $\overline{ span\left[\{\tilde{y}_m\}_{m\in\mathbb{N}}\right]}=\overline{ span\left[\{y_m\}_{m\in\mathbb{N}}\right]}=\hilbert$. In other words, $(\tilde{y}_m)_{m\in\mathbb{N}}$ is an orthonormal basis of $\hilbert$. Since the coefficients of any vector are unique in the basis $(\tilde{y}_m)_{m\in\mathbb{N}}$ and each $y_k\in span\left[\{\tilde{y}_m\}_{m=1}^{k-1}\right]$, it follows that $(y_m)_{m\in\mathbb{N}}$ is a Schauder basis for $\hilbert$. 
		
		$(ii) \Leftrightarrow (iii)$ Notice that $(A\vee \mathbbm{1})\Omega=A\Omega=0 \Leftrightarrow (\mathbbm{1}\vee A)\Omega^\ast=\Omega^\ast A^\ast=0$, $(\mathbbm{1}\vee A)\Omega=\Omega A^\ast=0 \Leftrightarrow (A^\ast \vee \mathbbm{1})\Omega^\ast=A^\ast \Omega^\ast=0$, and $\Omega^\ast$ interchange the roles of $(y_m)_{m\in\mathbb{N}}$ and $(z_n)_{n\in\mathbb{N}}$
		
		$(i)\Leftrightarrow (iv)$ It is the exact same argument used in the beginning of $(ii)\Leftarrow (i)$ to show that $\overline{span[\{y_n\}_{n\in\mathbb{N}}]}=\hilbert$ and an analogous argument to prove that $\overline{span[\{z_n\}_{n\in\mathbb{N}}]}=\hilbert$
	\end{proof}
	
	Let $\Omega=\sum_{m=1}^\infty\sum_{n=1}^\infty a_{mn}\ e_m\otimes e_n=\sum_{m=1}^\infty e_m\otimes y_m\in HS(\hilbert)$ be a cyclic and separating vector for $\calgebra_l$ and let us continue denoting $y_m=\sum_{n=1}^\infty a_{mn}e_n$ and $z_n=\sum_{m=1}^\infty a_{mn}^\ast e_m$. The antilinear operators $S_0: \calgebra_r\Omega \to \hilbert$ and $F_0: \calgebra_l\Omega \to \hilbert$ used in the definition of the modular operators are given by 
	$$\begin{aligned}
	S_0((A\vee \mathbbm{1})\Omega)&=S_0\left(\sum_{n=1}^\infty z_n\otimes Ae_n\right)=(A\vee \mathbbm{1})^\ast\Omega=A^\ast\Omega=\sum_{n=1}^\infty z_n \otimes A^\ast e_n,\\
	F_0((\mathbbm{1}\vee A)\Omega)&=F_0\left(\sum_{m=1}^\infty Ae_m\otimes y_m\right)=(\mathbbm{1}\vee A)^\ast\Omega=\Omega A=\sum_{m=1}^\infty A^\ast e_m \otimes y_m.\\
	\end{aligned}$$
	By taking $A=e_i\otimes e_j$, we have that
	\begin{equation}
	\label{calculation3}
	S_0\left(z_i\otimes e_j\right)=S_0\left(\sum_{n=1}^\infty z_n\otimes Ae_n\right)=\sum_{n=1}^\infty z_n \otimes A^\ast e_n=z_j \otimes e_i,
	\end{equation}
	and, doing the analogous calculation,
	\begin{equation}
	\label{calculation3.1}
	F_0\left(e_j\otimes y_i\right)=F_0\left(\sum_{m=1}^\infty A e_m\otimes y_n\right)=\sum_{m=1}^\infty A^\ast e_m \otimes y_m=e_i\otimes y_j.
	\end{equation}
	
	Let $T:\hilbert \to \hilbert$ such that $T(e_i)=y_i$. Using Theorem \ref{eqciclicseparating}, one can check that $T$ is injective, but it is not surjective since $(y_i)_{i\in\mathbb{N}}\in\ell_2(\hilbert)$. It is also important to notice that $\Phi:\ell_2\to \hilbert$ given by $\Phi((\alpha_n)_{n\in\mathbb{N}})=\sum_{i=1}^\infty \alpha_i e_i$ is an isometric isomorphism. Hence $T$ is a $2$-compact operator (an extensive treatment of $p$-compact operators can be found in \cite{Silva2013}), since $T\Phi(B_1^{\ell_2(\mathbb{C})})=T(B_1^\hilbert)$ is a $2$-compact set. One can check that, since we are in the Hilbert space case, to be $2$-compact is equivalent to be a Hilbert-Schmidt operator and we will prefer the last denomination since it is the most spread one. Let also $T^{-1}:\Ran{T}\to \hilbert$ denote the inverse of $T$, $T=U|T|$ and $T=V|T^\ast|$, where $U$ and $V$ are unitary operators (since $\Ran{T}$ and $\Ran{T^\ast}$ are dense in $\hilbert$), be the polar decompositions of $T$ and $T^\ast$, respectively. Then,
	$$T^\ast=V\left|T^\ast\right|=VU|T|U^\ast \Rightarrow T=U|T|(VU)^\ast=T(VU)^\ast\Rightarrow T(\mathbbm{1}-(VU)^\ast)=0,
	$$	it follows from the fact that $T$ is injective that $(VU)^\ast=\mathbbm{1}$. Hence, $V=U^{-1}=U^\ast$.

	Notice that $T=\sum_{i=1}^\infty y_i\otimes e_i$. Hence $$T^\ast=\sum_{i=1}^\infty e_i\otimes y_i=\sum_{i=1}^\infty\sum_{n=1}^\infty a_{in} \ e_i\otimes e_n=\sum_{n=1}^\infty \left(\sum_{i=1}^\infty a_{in}^\ast e_i\right)\otimes e_n=\sum_{n=1}^\infty z_n \otimes e_n.$$
	
	Then
	\begin{equation}
	\begin{aligned}
	FS(T^\ast T e_i\otimes e_j)\quad &=\quad FS\left(\left(T^\ast \sum_{k=1}^\infty a_{ik} e_k\right)\otimes e_j\right)
	&=&\quad FS\left(\left(\sum_{k=1}^\infty a_{ik} z_k\right)\otimes e_j\right)\\
	&=\quad F\left(\sum_{k=1}^\infty a_{ik} S\left(z_k\otimes e_j\right)\right)
	&=&\quad F\left(\sum_{k=1}^\infty a_{ik}  \ z_j\otimes e_k\right)\\
	&=\quad F\left( z_j\otimes\left(\sum_{k=1}^\infty a_{ik} e_k\right)\right)
	&=&\quad F\left( z_j\otimes y_i\right)\\
	&=\quad F\left(\left(\sum_{l=1}^\infty a_{lj}^\ast e_l\right)\otimes y_i\right)
	&=&\quad \sum_{l=1}^\infty a_{lj}^\ast F\left( e_l \otimes y_i\right)\\
	&=\quad \sum_{l=1}^\infty a_{lj}^\ast \ e_i \otimes y_l
	&=&\quad e_i \otimes \left(\sum_{l=1}^\infty a_{lj}^\ast  T e_l\right)\\
	&=\quad e_i \otimes \left( T \sum_{l=1}^\infty a_{lj}^\ast e_l\right)
	&=&\quad e_i \otimes \left( T T^\ast e_j\right)\\
	\end{aligned}
	\end{equation}

	It is important to stress that we can commute $S$ or $F$ with the infinity sums in the steps above because $S$ and $F$ are closed operators and both sums are convergent.
	
	Hence, by linearity and closedness, for any $x\in \Dom{T^{-1} (T^\ast)^{-1}}$
	\begin{equation}\label{modoperator}
	\begin{aligned}
	\Delta_\Omega \left(x\otimes y\right)&=FS\left(x\otimes y\right)\\
	&=FS\left(T^\ast T [T^{-1} (T^\ast)^{-1} x] \otimes y\right)\\
	&=\left(T^{-1} (T^\ast)^{-1} x\right)\otimes \left(TT^\ast y\right)\\
	&=\left(|T|^{-2} x\right)\otimes \left(|T^\ast|^2 y\right)\\
	&=\left(|T|^{-2}\vee \mathbbm{1}\right)\left(\mathbbm{1}\vee|T^\ast|^2\right) x\otimes y
	\end{aligned}\end{equation}
	
	Since
	$$\begin{aligned}
	z_j \otimes e_i&=S(z_i\otimes e_j)\\
	&=J\Delta_\Omega^{\frac{1}{2}}(z_i\otimes e_j)\\
	&=J\left(|T|^{-1} z_i\right)\otimes \left(|T^\ast| e_j\right)\\
	&=J\left(U^\ast (T^\ast)^{-1} z_i\right)\otimes \left(V^\ast z_j\right)\\
	&=J\left(U^\ast e_i\right)\otimes \left(V^\ast z_j\right)\\
	&\Rightarrow J(e_i\otimes e_j)=U^\ast e_j\otimes Ue_i
	\end{aligned}$$
	
	In the light of the Tomita-Takesaki Theorem, it is interesting to compute the effect of of $JA\vee \mathbbm{1}J$ for an operator $A \in HS(\hilbert)$.
	\begin{equation}\label{modconj}\begin{aligned}
	J (A\vee\mathbbm{1})J x\otimes y&=J (A\vee\mathbbm{1}) U^\ast y\otimes Ux\\
	&=J \ U^\ast y\otimes AUx\\
	&=U^\ast AUx\otimes UU^\ast y\\
	&=U^\ast AUx\otimes y\\
	&=U^\ast AUx\otimes UU^\ast y\\
	&=(\mathbbm{1}\vee U^\ast AU)x\otimes y,
	\end{aligned}\end{equation}
	thus $J(A\vee\mathbbm{1})J=(\mathbbm{1}\vee U^\ast AU)\in \calgebra_r$, as it should be.
	
	\subsection{Hamiltonian}\label{H}
	
	The modular Hamiltonian $H$ is defined in analogy with the case where the system is described by a density matrix by $\Delta_\Omega=e^{H}$. Here it is important to remember that there is a difference between physicists' and mathematicians' convention: remember that the modular condition, which is satisfied by the modular automorphism group, is equivalent to the $(-1)$-KMS condition.
	
	\begin{equation}\begin{aligned}\label{modularhamiltonian}
	H&\defeq\log{\Delta_\Omega}\\
	&=\log{\left(|T|^{-2}\vee \mathbbm{1}\right)\left(\mathbbm{1}\vee|T^\ast|^2\right)}\\
	&=-2\log{\left(|T|\vee \mathbbm{1}\right)}+2\log{\left(\mathbbm{1}\vee|T^\ast|\right)}\\
	&=2\left(\log{|T|}\vee \mathbbm{1}\right)-2\left(\mathbbm{1}\vee\log{|T^\ast|}\right)\\
	\end{aligned}\end{equation}
	
	On the other hand, it is well known that the modular Hamiltonian defined in equation \eqref{modularhamiltonian} is related to the Hamiltonian by $H=h\vee\mathbbm{1}-Jh\vee\mathbbm{1}J$. Using equation \eqref{modconj} we get
	\begin{equation}h=-2\log{|T|}.\end{equation}
	
	Here we finish mentioning that we have characterized all the cyclic and separating vectors in the algebra and, therefore, all Hamiltonians that reach thermal equilibrium.
	
	A curious consequence is that if $\hilbert$ is infinite dimensional $h$ cannot be bounded bellow (or above in physicists' convention) in order to exist the thermal equilibrium state, since $0\in \sigma(|T|)$. On the other hand, by the very same reason, the Hamiltonian must always be bounded from above (from bellow in physicists convention).
	
	The fact that $T$ is a Hilbert-Schmidt operator imposes strong constraints to the possible Hamiltonians. In order to better understand this, we need to diagonalize $|T|$. Since $T\in HS(\hilbert)$ is injective, we can write the spectral decomposition of $|T|$ as $|T|=\sum_{i=1}^{\infty} \lambda_i e_i\otimes e_i$, where each $\lambda_i$ is strictly positive, $\lambda_{i+1}<\lambda_i$, and $\sum_{i=1}^\infty \lambda_i^2=\|T\|^2_2$. Then,
	$\displaystyle h=-2\log{|T|}=-2\sum_{i=1}^\infty \log{\lambda_i}e_i\otimes e_i$, from where it follows that $h$ must have a $\infty$-divergent discrete spectrum satisfying \begin{equation}\label{condham}\sum_{i=1}^\infty e^{-h_i}=\sum_{i=1}^\infty e^{2\log\lambda_i}=\sum_{i=1}^\infty \lambda_i^2=\|T\|_2^2<\infty.\end{equation}
	
	The converse is also true: if the Hamiltonian $h$ is a self-adjoint operator such that $e^{\frac{1}{2}h}$ is a Hilbert-Schmidt operator, then it admits a KMS state.

Let us obtain the KMS state to the Landau levels problem doing the reverse calculation. As we discussed in Section \ref{LandauLevels}, the Landau levels problem is fully characterized by having two quantum numbers - let's say $m$ and $n$ with $n\geq 0$ and $m\geq -n$ - associated to two (unbounded) commuting operators $a$ and $b$ affiliated to the algebra and satisfying equations \eqref{laddereigen} and related to the Hamiltonian and angular momentum by equations \eqref{eigenvalues}.
	
The association $|n,m\rangle=e_{n+1}\otimes e_{m+n+1}$ is now natural.
	
We have obtained that $$h=-2\log|T|\Rightarrow |T|\ |n,m\rangle=e^{-\frac{h}{2}}|n,m\rangle=e^{-\frac{1}{2}\left(n+\frac{1}{2}\right)} |n,m\rangle,$$
which warranties that $|T|$ is not in the $2$-Schatten class due to its degenerescence in $m$.
	
The conclusion is that there isn't a faithful normal KMS state for the Landau levels in this description, which contrasts with \cite{bagarello2010}. The difference is that in \cite[Section III]{bagarello2010} it is considered that the Hilbert space has finite dimension, from which follows that linear operators are Hilbert-Schmidt operators. The conclusion is that there exists a faithful normal KMS state in this construction if, and only if, the Hilbert space has finite dimension. In \cite[Section IV]{bagarello2010} the Hilbert space considered is $L_2(\mathbb{R})$, which is not finite dimensional. In addition, the ``density matrix'' seems to should be trace class as an operator on $\tilde{\hilbert}=HS(\hilbert)$ which would be true if, and only if, we restrict our analysis to an algebra with finite degeneracy in energy which is in complete agreement with the construction of the present work in the infinite dimensional case.

We are going to analyse the limit of the finite box (which happens to have a description in a finite dimensional subspace) in the next section.
	
Hence, only few options remain for the thermodynamical equilibrium state in the thermodynamical limit: (i) it does not exist; (ii) it is not faithful; (iii) it is not normal; (iv) it is not a KMS state at all.

In order to better analyse the physical situation we need to take the limit of the system confined in a bounded region. The next section is devoted to that.

\section{Degeneracy in the finite (circular) box}\label{sec:finitebox}
	
	There is standard literature about the degeneracy of the Landau levels problem, but here we decide to present a different approach. Since we are in the symmetric gauge it makes more sense to impose that $x^2+y^2\leq R^2$.
	
	Since the operators $a$ and $b$, and their respectively adjoints, are defined by equations \eqref{ladderop}, one can check that $x^2+y^2=aa^\dagger+a^\dagger a+2\left(a b^\dagger + a^\dagger b\right) + b b^\dagger+b^\dagger b$ and that
	$$\begin{aligned}
	\left(x^2+y^2\right)\sum &a_{n,m}|n,m\rangle=\sum \left(4n+2m+2\right)a_{n,m}|n,m\rangle+
	\\
	&\hfill+2\left(\sum \sqrt{n}\sqrt{n+m+1}\ a_{n,m}  |n-1,m+2\rangle+\sum \sqrt{n+1}\sqrt{n+m}\ a_{n,m}|n+1,m-2\rangle\right)
	\end{aligned},$$
hence, in order to have an eigenvector with eigenvalue $\lambda$, we must have
\begin{equation}
\label{coeff}
2\sqrt{n+1}\sqrt{n+m}\ a_{n+1,m-2}+\left(4n+2m+2-\lambda\right)a_{n,m}+2\sqrt{n}\sqrt{n+m+1}\ a_{n-1,m+2}=0,
\end{equation}
where we are considering that $a_{n,m}=0$ for $n<0$ and $m<-n$.

$$\begin{tikzpicture}
\matrix[matrix of math nodes,row sep=15pt,column sep = 15pt] (m){
	&\ &\ &\ &\ &0 &0 &0 &0 &0 &\cdots\\
	&\ &\ &\ &0 &a_{0,0} &a_{0,1} &a_{0,2} &a_{0,3} &a_{0,4} & \cdots\\
	&\ &k=0\rightarrow \hspace{-0.7cm}&0 &a_{1,-1} &a_{1,0} &a_{1,1} &a_{1,2} &a_{1,3} &a_{1,4} & \cdots\\
	&k=1\rightarrow \hspace{-0.7cm} &0 &a_{2,-2} &a_{2,-1} &a_{2,0} &a_{2,1} &a_{2,2} &a_{2,3} &a_{2,4} & \cdots\\
	k=2\rightarrow \hspace{-0.6cm}&0 &a_{3,-3} &a_{3,-2} &a_{3,-1} &a_{3,0} &a_{3,1} &a_{3,2} &a_{3,3} &a_{3,4} & \dots\\
	\udots&\vdots &\vdots &\vdots &\vdots &\vdots &\vdots &\vdots &\vdots &\vdots & \ddots\\
};

\draw (m-3-4.north east) -- (m-3-4.south east) -- (m-3-4.south west) -- (m-3-4.north west) -- (m-3-4.north east);
\draw (m-3-4.north east) -- (m-2-6.south west);
\draw (m-2-6.north east) -- (m-2-6.south east) -- (m-2-6.south west) -- (m-2-6.north west) -- (m-2-6.north east);
\draw (m-2-6.north east) -- (m-1-8.south west);
\draw (m-1-8.north east) -- (m-1-8.south east) -- (m-1-8.south west) -- (m-1-8.north west) -- (m-1-8.north east);

\draw (m-4-3.north east) -- (m-4-3.south east) -- (m-4-3.south west) -- (m-4-3.north west) -- (m-4-3.north east);
\draw (m-4-3.north east) -- (m-3-5.south west);
\draw (m-3-5.north east) -- (m-3-5.south east) -- (m-3-5.south west) -- (m-3-5.north west) -- (m-3-5.north east);
\draw (m-3-5.north east) -- (m-2-7.south west);
\draw (m-2-7.north east) -- (m-2-7.south east) -- (m-2-7.south west) -- (m-2-7.north west) -- (m-2-7.north east);
\draw (m-2-7.north east) -- (m-1-9.south west);
\draw (m-1-9.north east) -- (m-1-9.south east) -- (m-1-9.south west) -- (m-1-9.north west) -- (m-1-9.north east);

\draw (m-5-2.north east) -- (m-5-2.south east) -- (m-5-2.south west) -- (m-5-2.north west) -- (m-5-2.north east);
\draw (m-5-2.north east) -- (m-4-4.south west);
\draw (m-4-4.north east) -- (m-4-4.south east) -- (m-4-4.south west) -- (m-4-4.north west) -- (m-4-4.north east);
\draw (m-4-4.north east) -- (m-3-6.south west);
\draw (m-3-6.north east) -- (m-3-6.south east) -- (m-3-6.south west) -- (m-3-6.north west) -- (m-3-6.north east);
\draw (m-3-6.north east) -- (m-2-8.south west);
\draw (m-2-8.north east) -- (m-2-8.south east) -- (m-2-8.south west) -- (m-2-8.north west) -- (m-2-8.north east);
\draw (m-2-8.north east) -- (m-1-10.south west);
\draw (m-1-10.north east) -- (m-1-10.south east) -- (m-1-10.south west) -- (m-1-10.north west) -- (m-1-10.north east);
\end{tikzpicture}$$

Notice that we highlight the elements related by equation \eqref{coeff} and that there exists an equation for each three consecutive coefficients. Hence for each set of highlighted coefficients we have the same number of elements as we have equations. Thus, all the coefficients are determined by solving a (finite) tridiagonal system of linear equations. Since these systems of equations are always homogeneous and we are looking for a non-null solution, we can require that the determinant of the coefficients must vanish. We stress that this requirement implies that $\lambda$ (which is the eigenvalue of the operator $x^2+y^2$) must be a root of a polynomial of degree $n$, which always exists. Finally, since $x^2+y^2$ is a symmetric operator $\lambda$ must be a real number. Notice that if we set as zero all coefficients but the ones in a particular diagonal line, we obtain an eigenvector.

The tridiagonal matrix corresponding to the system starting in the $k$-line, $0\leq k$, is

$$A_k\hspace{-0.2em}=\hspace{-0.4em}\begin{tikzpicture}[every left delimiter/.style={xshift=0.9em},
every right delimiter/.style={xshift=-0.9em}, baseline=(current bounding box.center)]
\matrix (m) [matrix of math nodes,nodes in empty cells,right delimiter={]},left delimiter={[} ]{
	 (2k+2-\lambda) & 2\sqrt{k}\sqrt{1} &0 & & & &0\\
	2\sqrt{k}\sqrt{1}& (2k+2-\lambda)&2\sqrt{k-1}\sqrt{2}& & & &\\
	0&2\sqrt{k-1}\sqrt{2}& (2k+2-\lambda) &2\sqrt{k-2}\sqrt{3}& & &\\
	 & &2\sqrt{k-2}\sqrt{3}&(2k+2-\lambda) &2\sqrt{k-3}\sqrt{4}& &\\
	 & & & & & &\\
	 & & & & & &0\\
	 & & & & & &2\sqrt{1}\sqrt{k}\\
	0& & & & 0&2\sqrt{1}\sqrt{k}&(2k+2-\lambda)\\
};
\draw[loosely dotted] (m-3-1)-- (m-8-1);
\draw[loosely dotted] (m-3-1)-- (m-8-5);
\draw[loosely dotted] (m-8-1)-- (m-8-5);
\draw[loosely dotted] (m-4-3)-- (m-8-6);
\draw[loosely dotted] (m-4-4)-- (m-8-7);
\draw[loosely dotted] (m-4-5)-- (m-7-7);
\draw[loosely dotted] (m-1-3)-- (m-1-7);
\draw[loosely dotted] (m-1-7)-- (m-6-7);
\draw[loosely dotted] (m-1-3)-- (m-6-7);
\label{figureeigenvectors}
\end{tikzpicture}$$

We recognise that this is equivalent to finding the eigenvalues of the matrix $B_k=A_k+\lambda \mathbbm{1}_k$, which is a Jacobi matrix (which is not surprising, since probably it is due to some connection with orthogonal polynomials), hence we know that this matrix has simple eigenvalues.

In order to find its eigenvalues, notice that
if we have $2k+2-\lambda=2k$ we can cancel the first term in the lower-diagonal by subtracting from it the first line times $\frac{1}{\sqrt{k}}$ and the last upper-diagonal term doing the same using the second to last row. The resulting matrix will be
$$\tilde{B}_k=\begin{bmatrix}
2k&2\sqrt{k}&0&\cdots & 0\\
0&2k-2&2\sqrt{k-1}\sqrt{2}&\ddots&\vdots\\
0&2\sqrt{k-1}\sqrt{2}&2k&\ddots&0\\
\vdots&\ddots&\ddots&2k-2 & 0\\
0&\cdots& 0 &2\sqrt{k} &2k
\end{bmatrix}.$$

Notice that $\det B_k=(2k)^2 \det \left(\tilde{B}_k\right)_{\{1,k+1\}}$ were $\left(\tilde{B}_k\right)_{\{1,k+1\}}$ is the principal submatrix obtained by omitting the first and last rows and columns. We can proceed inductively by multiplying the replacing the $j$-row by its sum with $\frac{\sqrt{j}}{\sqrt{k-j+1}}$ and do the same with the $(k-j+1)$-row until it remains only a one-row or two-row square central block unchanged, depending on $k$ being even or odd, respectively. One can check that the central blocks are just the null element, in case $k$ is even, and $\begin{bmatrix}
k+1 & k+1\\
k+1& k+1\\
\end{bmatrix}$ in case k is odd, and in both the determinant vanish.

More generally by using the similar procedure, one can check that the determinant of the matrix $A_k$ satisfies $2k+2-\lambda = 2k-4j$ for $0\leq j \leq k$. We just would like to add that we explicitly used that $2k-2j$ is never zero for $1\leq j\leq {k+1}{2}$, but will not be true in the procedure to $2k-4j$. In this case, one can check that the vanishing term just occurs in the $k-j$-th row and the determinant is equal to the product of the diagonal terms, omitting the vanishing one, and the determinant of the remaining $(j+1)$-size block, for which our elimination can be performed again.

The conclusion is that the eigenvalues of $A_k$ are $2+4j$ for $0\leq j\leq k$. This shows us not only that the operator $x^2+y^2$ is an unbounded operator (which was quite obvious from the beginning), but also gives us some important information about the spectral projection of this operator. In particular, to ensure that the spectrum of $x^2+y^2$ is limited by a quantity $R^2$, we must restrict our Hilbert space in such a way that it does not contain infinitely many diagonal lines. We can use that the largest eigenvalue of $x^2+y^2$ is $2+4k$ to obtain that in order of the spectrum $\sigma(x^2+y^2)\subset[0,R^2]$ is necessary and sufficient that $\lambda_{max} =2+4k\leq R^2\Rightarrow k\leq \frac{R^2-2}{4}$. Here is important to remember that $k$ refers to the diagonal line in figure \ref{figureeigenvectors} that starts at the $k$-th energy level. Its also interesting to note that if we ``cut'' the Hilbert space in the $k$-th diagonal line (which is equivalent to consider the space $E^{x^2+y^2}_{(-\infty, k]}\hilbert$), the $n$-th energy level, with $0\leq n\leq k$ has degeneracy equals $k-n+1$. Hence, the total number of states considering this cut is $\frac{(k+2)(k+1)}{2}=\frac{(R^2+6)(R^2+2)}{8}$, which grows faster than the area in the limit $k\to \infty$.

It is well known that the occupation number of a system at the energy level $i$ with energy $E_i$ and degeneracy $d_i$ is equal to
$N_n^k=\frac{d_i}{e^{\beta E_i}+1}=\frac{k-n+1}{e^{\beta \left(n+\frac{1}{2}\right)+1}}$ in the fermionic case (which is the case if we want restrict our discussion to electrons) and $N_n^k=\frac{d_i}{e^{\beta E_i}-1}=\frac{k-n+1}{e^{\beta \left(n+\frac{1}{2}\right)}-1}$ in the bosonic case.

Notice that, for any fixes $i$ and $j$, in the limit $k\to \infty$, $\frac{N_i^\infty}{N_j^\infty}=\frac{1\pm e^{\beta i}}{1\pm e^{\beta j}}$, which is exactly the ratio obtained in the usual bosonic and fermionic cases and has a strong connection with the condition $\sum_{n=0}^k N_n^k=N$. This leads to the conclusion that the probability to find a particle in a state with energy $E_n$, in the thermodynamic limit, is $P(E_n)=\frac{e^{-\beta E_n}}{Z}$, where $Z=\sum_{n=1}^\infty e^{-\beta E_n}$. On the other hand, the probability of finding a particle in the specific state $|n,m\rangle$, in the thermodynamic limit, is zero.

Such behaviour is not surprising and the reason for that is the non-$\sigma$-additivity of the probability limit. Let's give a simple example that illustrate the situation: in a box with $k$ identical balls numbered from $1$ to $k$, what is the probability of draw a ball with a specific number? It is obviously $P(j)=\frac{1}{k}$ and, in the limit $k\to \infty$, it results in $P(j)=0$. On the other hand, the probability of draw any number or an even number are $P(\textrm{any number})=1\to 1$ and $P(\textrm{even number})=\frac{\lfloor \frac{k-1}{2}\rfloor}{k}\to \frac{1}{2}$. Notice that, in the limit $k\to\infty$ we can define the probability of the number in a draw be part of a set $A\subset{N}$ is $P(A)=\lim_{k\to\infty}\frac{\#A\cap \{1,2,\cdots, k\}}{k}$, which is additive and satisfies every probability axiom but $\sigma$-additivity. These ideas are also connected with the difference between improbable and impossible. 

Here it is interesting to note that, in the usual situation in statistical mechanics, $e^{-\beta H}$ is a trace-class operator, therefore compact. Hence, thanks to Gröthendiek's theorem, its eigenvalues in decreasing order must form a sequence that converges to zero, from which we conclude that infinite degeneracy broken when we restrict our system to a finite box is a quite general behaviour in quantum statistical mechanics. However, no general behaviour of the occupation number is expected in the thermodynamical limit, since it is the limit of the ratio of the occupation numbers that seems to play the relevant role in this situation and it may have very distinctive behaviour. A counter intuitive physical system would be one such that the degeneracy of the ground level (or any fixed level) grows so fast that $\lim_{k\to \infty} \frac{N_0^k}{\sum_{n=0}^k N_i^k}\to 1$. This situation should not be confused with some kind of Bose-Einstein condensation, since, although all the particles are in the same energy level, it is expected to have just few particles in each quantum state. However the author is not aware of any physical system with this property.

Back to the options listed in Section \ref{classification}, notice that the expected thermodynamical limit from the physical view point is not expected to be neither (ii) faithful nor (iii) normal. If only fidelity was missing, it would not be a problem, since we could just take the quotient by the kernel of the state. Notice that this would means that all information about the quantum number related to the angular momentum would be lost and, in this situation, only projections related to the energy quantum number would exist, hence we would have a KMS.
Nevertheless, this seems not to describe the complete physical situation and just lost all the angular momentum quantum number is not reasonable, since it is valid to ask the probability finding the state with $m$ being even. It also call our attention to the important fact that been a KMS state strongly depends on the algebra we are considering.

\section{Conclusions}
	
The main results showed in the present paper are: 
\begin{enumerate}[(i)]
	\item We classified all cyclic and separating vectors of a von Neumann algebra that seems suitable for the description of system with infinity degeneracy.
	 
	\item We showed that the Hamiltonian of the Landau levels and more general situations of system with infinitely degeneracy in energy does not admit faithful normal KMS states.
	
	\item We analysed the thermodynamical limit of the system restricted to a finite box, since there is no infinity degeneracy, and showed that this limit in general is neither faithful nor normal.
\end{enumerate}

Is the present opinion of the author that the restriction of a KMS states been normal in $W^\ast$-dynamical systems should be replaced for quantum systems with infinite degeneracy in energy by the state be a normal KMS state only on the von Neumann algebra generated by the projections on each energy level. Moreover, this motivates the study of singular traces, e.g. Dixmier traces, as physically motivated ones.

	\bibliographystyle{ieeetr}
	\bibliography{references}

\end{document}